\documentclass[letterpaper, 10 pt, conference]{ieeeconf}
\usepackage{graphicx}
\usepackage{amsmath}
\usepackage{subfigure}
\usepackage{psfrag}
\usepackage{epsfig}
\usepackage{amssymb}
\usepackage{latexsym}
\usepackage{amsmath}
\input{epsf}
\include{epsf}

\newtheorem{lemma}{Lemma}
\newtheorem{proposition}{Proposition}
\newtheorem{algorithm}{Algorithm}
\newtheorem{assumption}{Assumption}
\newtheorem{corollary}{Corollary}

\newcommand{\beq}{\begin{equation}}
\newcommand{\eeq}{\end{equation}}
\newcommand{\beqr}{\begin{equation}\begin{array}{l}}
\newcommand{\eeqr}{\end{array}\end{equation}}
\newcommand{\beqa}{\begin{eqnarray}}

\newcommand{\eeqa}{\end{eqnarray}}

\def\qed{\hfill\rule[-1pt]{5pt}{5pt}\par\medskip}

\newcommand{\bea}{\begin{eqnarray}}
\newcommand{\eea}{\end{eqnarray}}

\def\qed{\hfill\rule[-1pt]{5pt}{5pt}\par\medskip}

\title{\LARGE \bf
Algorithm for Optimal Mode Scheduling in  Switched Systems}
\author{Y. Wardi and M. Egerstedt
\thanks{ School of
Electrical and Computer Engineering, Georgia Institute of Technology,
Atlanta, GA 30332, USA. Email: \{ywardi,magnus\}@ece.gatech.edu}
}

\begin{document}
\maketitle
\thispagestyle{empty}
\pagestyle{empty}
\begin{abstract}
This paper considers the problem of computing the schedule of modes in a switched dynamical system, that minimizes
a cost functional defined on the trajectory of the system's continuous state variable. A recent approach  to such optimal control problems consists of 
algorithms that alternate  between computing the optimal switching times between modes in a given sequence, and updating the mode-sequence by inserting to it
a finite number of new modes.
 These algorithms have an inherent inefficiency due to their sparse update of the mode-sequences, while
spending most of the computing times on optimizing with respect to the switching times for a given mode-sequence. This paper proposes an
algorithm that operates directly in the schedule space without resorting to the timing optimization problem.
It is based on the Armijo step size along certain G\^ateaux derivatives of the performance functional, thereby avoiding some of the computational
difficulties associated with discrete scheduling parameters. Its convergence to local minima as well as its rate of convergence are proved,
and a simulation example on a nonlinear system exhibits quite a fast convergence.
\end{abstract}

\section{Introduction}
Switched-mode hybrid dynamical systems often are characterized by the following equation,
\begin{equation}
\dot{x}\ =\ f(x,v),
\end{equation}
where $x\in R^n$ is the state variable, $v\in V$ with $V$ being a given finite set,
and $f:R^n\times V\rightarrow R^n$ is a suitable function. Suppose that the system evolves on a horizon-interval
$[0,T]$ for some $T>0$, and  that the initial state $x(0)=x_{0}$ is given for some $x_{0}\in R^n$. The input control of this system, $v(t)$, is
discrete since $V$ is a finite set, and  we assume that
the function $v(t)$ changes its values a finite number of times during the horizon interval
$[0,T]$.

Such systems have been investigated in the past several years due to their relevance in
control applications such as  mobile robotics \cite{Egerstedt00}, vehicle control \cite{Wang97}, switching circuits \cite{Almer10} and
references therein,
telecommunications \cite{Rehbinder00,Hristu01}, and situations where a controller has to switch its attention among multiple subsystems
 \cite{Lincoln01} or data sources \cite{Brockett95}.
Of a particular interest in these applications is an optimal control problem where it is desirable to
minimize a cost functional (criterion)
of the form
\begin{equation}
J\ :=\ \int_{0}^{T}L(x)dt
\end{equation}
for a given $T>0$,
where $L:R^n\rightarrow R$ is a cost function defined on the state  trajectory.

This general nonlinear optimal-control problem was formulated in \cite{Branicky98}, where the particular values of $v\in V$ are associated with
the various modes of the system.\footnote{The setting in \cite{Branicky98} is more general since it involves a continuous-time control
$u\in R^k$ as well as a discrete control $v$. In this paper we focus only on the discrete control since it captures the salient points of switched-mode systems, and we
defer discussion of the general case to a forthcoming publication.}
 Several variants of the
maximum principle were derived for this problem in \cite{Sussmann99,Piccoli98,Shaikh02}, and subsequently  provably-convergent optimization algorithms  were developed in
\cite{Xu02,Shaikh02,Shaikh07,Egerstedt06,Attia05}. We point out that two kinds of problems were considered: those where the sequence of modes is fixed and the
controlled variable consists of the switching times between
them, and those where the controlled variable is comprised of the sequence of modes as well as the switching times between them.
We call the former problem the {\it timing optimization problem}, and the latter problem,
the {\it scheduling optimization problems}.

The timing optimization problem generally is simpler than the scheduling optimization problem since essentially it  is a nonlinear-programming
problem (albeit with a special structure) having only  continuous variables, while the scheduling problem has a discrete sequencing-variable as well. Furthermore, scheduling problems generally are
NP hard, and computational techniques have to search for solutions that are suboptimal in a suitable sense.  Thus, while the algorithms that were proposed early
focused on the timing optimization problem, several different (and apparently complementary) approaches to
the scheduling-optimization problem have emerged as well.  Zoning algorithms that compute (iteratively) the mode
sequences based on geometric properties of the problem have been developed in \cite{Shaikh05}, needle-variations techniques were presented in
  \cite{Axelsson08}, and relaxation methods were proposed in  \cite{Caldwell10}.  In contrast, the
  algorithm considered in this paper computes its iterations
  directly in the schedule space without resorting to relaxations, and as argued later in the sequel, may compute optimal (or suboptimal) schedules quite
  effectively.

Our stating point is the algorithm  we developed in \cite{Axelsson08} which alternates between the following two steps: (1). Given a sequence of modes,
compute
the switching times among them that minimize the functional $J$. (2). Update the mode-sequence by inserting to it a single mode at a (computed) time
that would lead to the greatest-possible reduction rate in $J$. Then repeat Step 1, etc.

The second step deserves some explanation. Fix a  time $t\in[0,T]$, and let us denote the system's mode at that time by $M_{\alpha}$. Now
suppose that we replace this mode by another mode, denoted by $M_{\beta}$, over the time-interval $[t,t+\lambda]$ for some given $\lambda>0$,
and denote by $\tilde{J}(\lambda)$ the cost functional  $J$ defined by (2) as a function of
$\lambda$. We call the one-sided derivative $\frac{d\tilde{J}}{d\lambda^+}(0)$  the {\it insertion gradient}, and we note that if
$\frac{d\tilde{J}}{d\lambda^+}(0)<0$ then inserting $M_{\beta}$ for a brief amount of time at time $t$ would result in a decrease in $J$, while
if $\frac{d\tilde{J}}{d\lambda^+}(0)>0$ then such an insertion would result in an increase in $J$. Now the second step of the
 algorithm computes the time $t\in [0,T]$ and mode $M_{\beta}$ that minimize  the insertion gradient, and it performs the insertion accordingly. We mention that if
 the insertion gradient is non-negative for every mode $M_{\beta}$ and time $t\in[0,T]$ then the schedule in question satisfies
 a necessary optimality condition and no insertion is performed.

 The aforementioned algorithm has a peculiar feature in that it solves a timing optimization problem between consecutive mode-insertions.
 This feature appears awkward and suggests that the algorithm can be quite inefficient, but it is required for the convergence-proof
 derived in \cite{Axelsson08}. In fact, that proof breaks down if the insertions are made for schedules that do not necessarily
 comprise solution points of the timing optimization problem for their given mode-sequences. The reason seems to be in the fact that the
 insertion gradient
 is not continuous in the time-points at which the insertion of a given mode $M_{\beta}$ is made.
 However, this lack of continuity can be overcome by other properties of the problem at hand, and this leads to the development
 of the algorithm  that is proposed in this paper, which appears to be more efficient than the one in \cite{Axelsson08}.

 The algorithm we describe here computes its iterations directly in the space of mode-schedules without having to solve any timing optimization
 problems. Furthermore, at each iteration it switches the mode not at a finite set of times,   but at sets comprised of  unions of positive-length
intervals in the time-horizon $[0,T]$. The algorithm  is based on the idea of the Armijo step size used in gradient-descent techniques \cite{Polak97}, and it uses
the Lebesgue measure of sets where the modes are to be changed as the step-size parameter. To the best of our knowledge this idea has not been used in
extant algorithms for optimal control problems, and while it appears natural in the setting of switched-node systems, it may have extensions to
other optimal-control settings as well. We prove the algorithm's convergence and its convergence-rate, which we show to be independent of the number of intervals where the modes are changed at a given iteration.

The rest of the paper is organized as follows. Section II sets the mathematical formulation of the problem and recounts some established results. Section III
carries out the analysis, while Section IV presents a simulation example. Finally, Section V concludes the paper.

\section{Problem Formulation and Survey of Relevant Results}
Consider the state equation (1) and recall that the initial state $x_{0}$ and the final time $T>0$ are given.
We make the following assumption regarding the vector field $f(x,v)$ and the state trajectory $\{x(t)\}$.
\begin{assumption}
(i). For every $v\in V$, the function $f(x,v)$ is twice-continuously differentiable ($C^2$) throughout $R^n$. (ii).
The state trajectory $x(t)$ is continuous at all $t\in[0,T]$.
\end{assumption}

Every mode-schedule is associated with an input control function $v:[0,T]\rightarrow V$, and we define an admissible mode schedule to
be a schedule whose associated control function $v(\cdot)$ changes its values a finite number of times
throughout the interval $t\in[0,T]$. We denote the space of admissible schedules by $\Sigma$, and a typical admissible schedule by
$\sigma\in\Sigma$.
Given  $\sigma\in\Sigma$, we define the length of $\sigma$ as
the number of  consecutive different values of $v$ on the horizon interval $[0,T]$, and denote it  by $\ell(\sigma)$. Furthermore, we denote the $ith$ successive value of $v$ in
$\sigma$ by
$v^i$, $i=1,\ldots,\ell(\sigma)$, and the switching time between $v^i$ and $v^{i+1}$ will be denoted by  $\tau_{i}$.
Further defining $\tau_{0}:=0$ and $\tau_{\ell(\sigma)}=T$, we observe that the input control  function is defined by
 $v(t)=v_{i}$ $\forall i\in[\tau_{i-1},\tau_{i})$,
$i=1,\ldots,\ell(\sigma)$. We require that $\ell(\sigma)<\infty$ but impose no upper bound on $\ell(\sigma)$.

Given $\sigma\in\Sigma$, define the costate $p\in R^n$ by the following differential equation,
\begin{equation}
\dot{p}\ =\ -\Big(\frac{\partial f}{\partial x}(x,v)\Big)^Tp-\Big(\frac{dL}{dx}(x)\Big)^T
\end{equation}
with the boundary condition $p(T)=0$.  Fix time $s\in[0,T)$,  $w\in V$, and $\lambda>0$, and consider replacing the value of $v(t)$
by $w$ for every $t\in[s,s+\lambda)$. This amounts to changing the mode-sequence $\sigma$ by
inserting the mode associated with $w$ throughout the interval $[s,s+\lambda)$. Denoting by $\tilde{J}(\lambda)$ the value of the cost
functional resulting from this insertion, the insertion gradient is defined by $\frac{d\tilde{J}}{d\lambda^+}(0)$.
Of course this insertion gradient depends on the mode-schedule $\sigma$, the inserted mode associated with $w\in V$, and the insertion time
$s$, and hence we denote it by
$D_{\sigma,s,w}$. We have the following result (e.g., \cite{Egerstedt06}):
\begin{equation}
D_{\sigma,s,w}\ =\ p(s)^T\big(f(x(s),w)-f(x(s),v(s))\big).
\end{equation}
As mentioned earlier, if $D_{\sigma,s,w}<0$ then inserting to $\sigma$
the mode associated with $w$ on a small interval  starting at time  $s$ would reduce the cost functional. On the other hand, if
$D_{\sigma,s,w}\geq 0$ for all $w\in V$ and $s\in[0,T]$
then we can think of $\sigma$ as satisfying a local optimality condition.
Formally, define $D_{\sigma,s}:=\min\{D_{\sigma,s,w}:w\in V\}$, and define $D_{\sigma}:=\inf\{D_{\sigma,s}:s\in[0,T]\}$.
Observe that $D_{\sigma,s,v(s)}=0$ since $v(s)$ is associated with the same mode at time $s$ and hence $\sigma$ is not modified,
and consequently, by definition, $D_{\sigma,s}\leq 0$ and $D_{\sigma}\leq 0$ as well. The condition $D_{\sigma}=0$
is a natural first-order necessary optimality condition, and the purpose of the algorithm described below is to compute a
mode-schedule $\sigma$ that satisfies it.

Our algorithm is a descent method based on the principle of the Armijo step size. Given a schedule $\sigma\in\Sigma$, it computes the next
schedule, $\sigma_{next}$, by changing the modes associated with points $s\in[0,T]$ where $D_{\sigma,s}<0$. The main point of departure from
existing algorithms (and especially those in \cite{Axelsson08}) is that the set of  such points $s$ is not finite or discrete, but has a positive
Lebesgue measure. Moreover, the Lebesgue measure of this set acts as the parameter for the Armijo procedure.

Now one of the basic requirements of algorithms in the general setting of nonlinear programming is that
 every accumulation
 point of a computed sequence of iteration points   satisfies a certain optimality condition, like stationarity
 or the Kuhn-Tucker condition. However, in our case such a convergence property is meaningless since the
 schedule-space $\Sigma$ is neither finite dimensional nor complete, the latter due to the requirement that $\ell(\sigma)<\infty$
 $\forall\ \sigma\in\Sigma$. Consequently convergence of our algorithm has to be characterized by other means, and to this
 end we use Polak's concept of minimizing sequences \cite{Polak84}. Accordingly,
 the quantity $D_{\sigma}$ acts as an {\it optimality function} \cite{Polak97}, namely the optimality condition in question is $D_{\sigma}=0$, while
 $|D_{\sigma}|$ indicates an extent to which $\sigma$ fails to satisfy that optimality condition.
 Convergence of an algorithm means that, if it computes a sequence of schedules $\{\sigma_{k}\}_{k=1}^{\infty}$ then,
\begin{equation}
 \limsup_{k\rightarrow\infty}D_{\sigma_{k}}\ =\ 0;
\end{equation}
in some cases the stronger condition $\lim_{k\rightarrow\infty}D_{\sigma_{k}}\ =\ 0$ applies. In either case, for every $\epsilon>0$
 the algorithm yields
an admissible mode-schedule $\sigma\in\Sigma$ satisfying  the inequality $D_{\sigma}>-\epsilon$.
 Our analysis will yield Equation (5)  by proving a uniformly-linear convergence rate of the algorithm.\footnote{The  reason for the ``limsup''
 in (5) instead of the stronger form of convergence (with ``lim'' instead of ``limsup'') is due to technical peculiarities of the
 optimality function $D_{\sigma}$ that will be discussed later. We will argue that the stronger form of convergence applies except for pathological
  situations. Furthermore, we will define an alternative optimality function and prove the stronger form of convergence for it.
  The choice of the most-suitable optimality function is largely theoretical and will not be addressed in this paper.}

 Since the Armijo step-size technique will play a key role in our  algorithm, we conclude this section with
 a recount of its main features. Consider the general setting of nonlinear programming where it is desirable
 to minimize  a $C^{2}$ function $f:R^n\rightarrow R$, and suppose that the
 Hessian
 $\frac{d^2f}{dx^2}(x)$ is bounded on $R^n$. Given $x\in R^n$, a steepest descent from $x$ is any vector in the direction $-\nabla f(x)$;
 we normalize the  gradient by defining $h(x):=\frac{\nabla f(x)}{||\nabla h(x)||}$, and call $-h(x)$  the steepest-descent  direction.
 Let $\lambda(x)\geq 0$ denote the step size so that the next point computed
 by the algorithm, denoted by $x_{next}$, is defined as
\begin{equation}
x_{next}\ =\ x-\lambda(x)h(x).
\end{equation}
 The Armijo step size procedure defines $\lambda(x)$ by an approximate line minimization in the following way
 (see \cite{Polak97}):
Given constants $\alpha\in(0,1)$ and $\beta\in(0,1)$, define the integer
$j(x)$ by
\begin{eqnarray}
j(x)\ :\ \min\Big\{j=0,1,\ldots,\ :\nonumber \\
 f(x-\beta^j\nabla f(x))-f(x)\leq-\alpha\beta^j||\nabla f(x)||^2\Big\},
\end{eqnarray}
and define
\begin{equation}
\lambda(x)\ =\ \beta^{j(x)}||\nabla f(x)||.
\end{equation}
Now the  {\it steepest descent algorithm with Armijo step size} computes a sequence of iteration points
$x_{k}$, $k=1,2,\ldots,$ by the formula $x_{k+1}=x_{k}-\lambda(x_{k})h(x_{k})$; $\lambda(x_{k})$ is called the Armijo step size at $x_{k}$.
The main convergence property of this algorithm \cite{Polak97} is
that every accumulation point $\hat{x}$ of a computed sequence $\{x_{k}\}_{k=1}^{\infty}$ satisfies the stationarity condition $\nabla f(\hat{x})=0$.
Several results concerning convergence rate have been derived as well, and the  one of interest to us
is given by Proposition 1 below. Its proof  is  contained in the arguments of the proof of Theorem 1.3.7 and especially Equation (8b) in \cite{Polak97},
but since we have not seen the result  stated in the same way as in Proposition 1, we provide a brief proof in the appendix.
\begin{proposition}
Suppose that $f(x)$ is $C^{2}$, and that there exists a constant $L>0$ such that, for
every $x\in R^n$, $||H(x)||\leq L$, where $H(x):=\frac{df^2}{dx^2}(x)$. Then the following two statements are true:
(1). For every $x\in R^n$ and for every
$\lambda\geq 0$ such that $\lambda\leq\frac{2}{L}(1-\alpha)||\nabla f(x)||$,
\begin{equation}
f(x-\lambda h(x))-f(x)\ \leq\ -\alpha\lambda||\nabla f(x)||.
\end{equation}
(2). For every $x\in R^n$,
\begin{equation}
\lambda(x)\ \geq\ \frac{2}{L}\beta(1-\alpha)||\nabla f(x)||.
\end{equation}
\qed
\end{proposition}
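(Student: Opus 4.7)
My plan is to prove part (1) via a second-order Taylor expansion along the normalized descent ray $x-\lambda h(x)$, using the uniform Hessian bound $\|H(\cdot)\|\leq L$, and then to derive part (2) as an immediate consequence of (1) combined with the geometric backtracking structure of the Armijo rule.

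For (1), I would start from a Taylor expansion with Lagrange remainder,
\begin{equation}
f(x-\lambda h(x))-f(x) \;=\; -\lambda\,\nabla f(x)^T h(x) + \tfrac{1}{2}\lambda^2\, h(x)^T H(\xi) h(x),
\end{equation}
with $\xi$ on the open segment from $x$ to $x-\lambda h(x)$. Since $h(x)$ is a unit vector aligned with $\nabla f(x)$, one has $\nabla f(x)^T h(x)=\|\nabla f(x)\|$, and the quadratic term is bounded in absolute value by $(L/2)\lambda^2$ via the Hessian bound. The resulting quadratic upper bound $-\lambda\|\nabla f(x)\|+(L/2)\lambda^2$ fails to exceed $-\alpha\lambda\|\nabla f(x)\|$ exactly when $\lambda\leq\tfrac{2(1-\alpha)}{L}\|\nabla f(x)\|$, which is (1).

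For (2), I would identify the trial step in the Armijo rule at index $j$ as a step of length $\lambda_j:=\beta^j\|\nabla f(x)\|$ along $-h(x)$, since $\beta^j\nabla f(x)=\lambda_j h(x)$. The Armijo test then takes the exact form $f(x-\lambda_j h(x))-f(x)\leq-\alpha\lambda_j\|\nabla f(x)\|$, and by (1) this is automatic once $\beta^j\leq\tfrac{2(1-\alpha)}{L}$. Consequently, if $j(x)\geq 1$ the preceding index $j(x)-1$ must violate the Armijo inequality, and by the contrapositive of (1) this forces $\beta^{j(x)-1}>\tfrac{2(1-\alpha)}{L}$; multiplying through by $\beta\|\nabla f(x)\|$ yields the desired lower bound on $\lambda(x)=\beta^{j(x)}\|\nabla f(x)\|$.

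The main subtlety is the boundary case $j(x)=0$, where the full gradient step already passes the Armijo test and no preceding index is available for the one-$\beta$-loss argument. This regime corresponds to $L$ being small relative to $\|\nabla f(x)\|$, and I would dispose of it either by verifying the lower bound directly in that case, or by observing that any valid Hessian bound may be replaced by a larger one without affecting the hypotheses, thereby reducing to $j(x)\geq 1$.
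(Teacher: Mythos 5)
Your proof of part (1) is the same argument as the paper's: a second-order Taylor expansion along $-h(x)$ with the quadratic term controlled by the uniform Hessian bound (the paper uses the integral form of the remainder rather than the Lagrange form, which changes nothing). For part (2) the paper merely writes that the claim ``follows directly from Part (1), Equation (7), and the definition of $\lambda(x)$''; your one-$\beta$-loss argument --- rewriting the Armijo test at index $j$ as $f(x-\lambda_j h(x))-f(x)\le-\alpha\lambda_j\|\nabla f(x)\|$ with $\lambda_j=\beta^j\|\nabla f(x)\|$, and applying the contrapositive of (1) at index $j(x)-1$ --- is exactly the omitted argument.

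The one place to be careful is the boundary case $j(x)=0$, which you correctly flag but do not quite resolve. When $j(x)=0$ one has $\lambda(x)=\|\nabla f(x)\|$, and the claimed bound $\lambda(x)\ge\frac{2}{L}\beta(1-\alpha)\|\nabla f(x)\|$ amounts to $L\ge 2\beta(1-\alpha)$; this can genuinely fail for small $L$ (e.g., a nearly affine $f$ admits a tiny valid Hessian bound), so the inequality cannot simply be ``verified directly in that case.'' Your second suggestion is the right one, but for a different reason than you give: enlarging $L$ does not change $j(x)$, since the Armijo rule never sees $L$, so it does not reduce matters to $j(x)\ge 1$; rather, replacing $L$ by any valid bound satisfying $L\ge 2\beta(1-\alpha)$ makes part (2) hold trivially when $j(x)=0$ while leaving the $j(x)\ge 1$ argument intact, at the harmless cost of shrinking the constant $c=\frac{2}{L}\alpha(1-\alpha)\beta$ used downstream in Corollary 1. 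The paper silently ignores this case, so on this point you are being more careful than the source.
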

This implies the following convergence result:
\begin{corollary}
(1). There exists $c>0$ such that $\forall x\in R^n$,
\begin{equation}
f(x_{next})-f(x)\ \leq\ -c||\nabla f(x)||^2.
\end{equation}
(2). If the algorithm computes a bounded sequence $\{x_{k}\}_{k=1}^{\infty}$ then
\begin{equation}
\lim_{k\rightarrow\infty}\nabla f(x_{k})\ =\ 0.
\end{equation}
\begin{proof}
(1). Define  $c:=\frac{2}{L}\alpha(1-\alpha)\beta$. Then (11) follows directly from Equations (9) and (10).

(2). Follows immediately from part (1) and the fact that the sequence $\{f(x_{k})\}_{k=1}^{\infty}$ is monotone non-increasing.
\end{proof}
\end{corollary}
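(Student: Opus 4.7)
The plan is to derive part (1) directly from Proposition~1, and then to obtain part (2) from part (1) by a telescoping argument using monotonicity of the objective values.

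For part (1), I would first unwind the Armijo definitions to relate (11) to an inequality already available. By (6) and (8), $x_{next}=x-\lambda(x)h(x)=x-\beta^{j(x)}\nabla f(x)$, so the Armijo test (7) evaluated at $j=j(x)$ reads exactly
\begin{equation}
f(x_{next})-f(x)\ \leq\ -\alpha\beta^{j(x)}||\nabla f(x)||^2.
\end{equation}
To convert the factor $\beta^{j(x)}$ into a fixed constant, I would invoke part (2) of Proposition~1: since $\lambda(x)=\beta^{j(x)}||\nabla f(x)||\geq\frac{2}{L}\beta(1-\alpha)||\nabla f(x)||$, dividing through by $||\nabla f(x)||$ (the case $\nabla f(x)=0$ being trivial since both sides of (11) vanish) yields $\beta^{j(x)}\geq\frac{2}{L}\beta(1-\alpha)$. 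Plugging this back gives (11) with the explicit constant $c:=\frac{2\alpha\beta(1-\alpha)}{L}$, which is strictly positive because $\alpha,\beta\in(0,1)$.

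For part (2), boundedness of $\{x_k\}$ together with continuity of $f$ implies that $\{f(x_k)\}$ is bounded below. Part (1) shows that this sequence is monotone non-increasing, hence convergent, so that $f(x_k)-f(x_{k+1})\to 0$. Rearranging (11) at $x=x_k$ gives
\begin{equation}
||\nabla f(x_k)||^2\ \leq\ \tfrac{1}{c}\bigl(f(x_k)-f(x_{k+1})\bigr),
\end{equation}
and passing to the limit yields $||\nabla f(x_k)||\to 0$, which is (12).

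I do not anticipate a real obstacle here; essentially all the analytical content has been absorbed into Proposition~1. The only point requiring some care is not to conflate the normalized search direction $h(x)$ used to define $x_{next}$ with the unnormalized gradient appearing on the right-hand side of the Armijo test (7) — once this bookkeeping is done correctly, $\beta^{j(x)}$ factors out cleanly and both parts follow.
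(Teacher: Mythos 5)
Your proof is correct and follows essentially the same route as the paper: part (1) combines the Armijo descent inequality with the step-size lower bound of Proposition 1(2) to obtain the constant $c=\frac{2}{L}\alpha\beta(1-\alpha)$ (identical to the paper's), and part (2) is the standard telescoping/monotonicity argument the paper invokes. Your unwinding of the normalization so that the descent inequality comes directly from the Armijo test (7) at $j=j(x)$, rather than from (9), is a slightly more careful bookkeeping of the same content.
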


\section{Algorithm for   Mode-Scheduling Minimization}
To simplify the notation and analysis we assume first that the set $V$ consists only of two elements, namely the
system is bi-modal. This assumption incurs no significant loss of generality, and at the end of this section we will point out
an extension to the general case where   $V$ consists of an arbitrary finite number of points.
Let us denote the two elements of $V$ by $v_{1}$ and $v_{2}$. A mode-schedule $\sigma$  alternates between these two points,
and we denote by $\{v^1,\ldots,v^{\ell(\sigma)}\}$ the sequence of values of $v$ associated with the mode-sequence comprising  $\sigma$.
Denoting by $v^c$ the complement of $v$, we have that  $v^{i+1}=(v^i)^c$ for all $i=1,\ldots,\ell(\theta)-1$.

Consider a mode-schedule $\sigma\in\Sigma$ that does not satisfy the necessary optimality condition,
 namely $D_{\sigma}<0$. Define the set $S_{\sigma,0}$ as
 $S_{\sigma,0}:=\{s\in[0,T]\ :\ D_{\sigma,s}<0\}$, and note that  $S_{\sigma,0}\neq\emptyset$. Recall that $v(s)$ denotes the value
 of $v$ at the time $s$. Then for every  $s\in S_{\sigma,0}$ which is not a switching time,
 an insertion  of the  complementary mode  $v(s)^c$  at $s$ for a small-enough period would result in a decrease of $J$.
 Our goal is to flip the modes (namely, to switch them
 to their complementary ones)  in a large subset of $S_{\sigma,0}$ that would result
 in a substantial decrease in $J$, where by the term  ``substantial decrease'' we mean a decrease by at least
 $aD_{\sigma}^2$ for some constant $a>0$. This ``sufficient descent'' in $J$ is akin to the descent property of the
 Armijo step size as reflected in Equation (11).

 This sufficient-descent property cannot be guaranteed by flipping the mode at every time $s\in S_{\sigma,0}$.  Instead, we search for
 a subset of $S_{\sigma,0}$  where, flipping the mode at every $s$ in that subset would guarantee
 a sufficient descent. This subset will consist os points $s$ where $D_{\sigma,s}$ is ``more negative'' than at typical
 points $s\in S_{\sigma,0}$.
 Fix $\eta\in(0,1)$ and define
 the set $S_{\sigma,\eta}$ by
 \begin{equation}
 S_{\sigma,\eta}\ =\ \big\{s\in[0,T]\ :\ D_{\sigma,s}\leq\eta D_{\sigma}\big\}.
 \end{equation}
 Obviously $S_{\sigma,\eta}\neq \emptyset$ since  $D_{\sigma}<0$.
Let $\mu(S_{\sigma,\eta})$ denote the Lebesgue measure of $S_{\sigma,\eta}$, and  more generally, let
$\mu(\cdot)$ denote the Lebesgue measure on $R$. For every  subset $S\subset S_{\sigma,\eta}$, consider
flipping the mode at every point $s\in S$, and denote by $\sigma(S)$ the resulting mode-schedule.  In the forthcoming we will search for
a set $S\subset S_{\sigma,\eta}$ that will give us the desired sufficient descent.

Fix $\eta\in(0,1)$. Let  $S:[0,\mu(S_{\sigma,\eta})]\rightarrow  2^{S_{\sigma,\eta}}$ (the latter object is the set  of
subsets of $S_{\sigma,\eta}$) be a mapping
 having the following two properties:
(i) $\forall\lambda\in [0,\mu(S_{\sigma,\eta})]$, $S(\lambda)$ is the finite union of closed intervals; and
(ii) $\forall\lambda\in[0,\mu(S_{\sigma,\eta})]$, $\mu(S(\lambda))=\lambda$.
We define $\sigma(\lambda)$ to be the mode-schedule obtained from
$\sigma$ by flipping the mode at every time-point
$s\in S(\lambda)$.
 For example, $\forall\lambda\in[0,\mu(S_{\sigma,\eta})]$ define
$s(\lambda):=\inf\{s\in S_{\sigma,\eta}:\mu([0,s]\cap S_{\sigma,\eta})=\lambda\}$, and define
$S(\lambda):=[0,s(\lambda)]\cap S_{\sigma,\eta}$. Then $\sigma(\lambda)$ is the schedule obtained
from $\sigma$ by flipping the modes lying in the leftmost subset of
$S_{\sigma,\eta}$ having  Lebesgue-measure $\lambda$, and it is the finite
union of closed intervals if so is $S_{\sigma,\eta}$.

We next  use such a mapping $S(\lambda)$ to define an Armijo step-size procedure for computing
a schedule $\sigma_{next}$ from $\sigma$.
Given constants $\alpha\in(0,1)$ and  $\beta\in(0,1)$, in addition to $\eta\in(0,1)$. Consider a given $\sigma\in\Sigma$ such that
$D_{\sigma}<0$.
For every
$j=0,1,\ldots$, define $\lambda_{j}:=\beta^{j}\mu(S_{\sigma,\eta})$, and
define $j(\sigma)$ by
\begin{equation}
j(\sigma):=\min\Big\{j=0,1,\ldots,\ :\ J(\sigma(\lambda_{j}))-J(\sigma)\leq\alpha\lambda_{j}D_{\sigma}\Big\}.
\end{equation}
Finally, define $\lambda(\sigma):=\lambda_{j(\sigma)}$, and set
$\sigma_{next}:=\sigma(\lambda(\sigma))$.

Observe that the Armijo step-size procedure is applied here not to the steepest descent (which is not defined in our problem setting) but to
a descent direction defined by a G\^ateaux derivative of $J$ with respect to a subset of the interval $[0,T]$ where the modes are to be flipped.
Generally this G\^ateux derivative is not necessarily continuous in $\lambda$ and hence the standard arguments for sufficient descent do not apply.
However, the problem has a special structure guaranteeing sufficient descent and the algorithm's convergence in
the sense of minimizing sequences. Furthermore, the sufficient descent property depends on $\mu(S_{\sigma,\eta})$ but is independent of both the string size
$\ell(\sigma)$ and the particular choice of
the mapping $S:[0,\mu(S_{\sigma,\eta})]\rightarrow 2^{S_{\sigma,\eta}}$. This guarantees that the convergence rate of the algorithm is not
reduced when the string lengths of the schedules computed in successive iterations grows
unboundedly.

We next present the algorithm formally.  Given constants $\alpha\in(0,1)$, $\beta\in(0,1)$,
and $\eta\in(0,1)$. Suppose that for every $\sigma\in\Sigma$ such that $D_{\sigma}<0$ there exists a mapping
$S:[0,\mu(S_{\sigma,\eta})]\rightarrow 2^{S_{\sigma,\eta}}$ with the  aforementioned properties.
\begin{algorithm}
{\it Step 0:} Start with an arbitrary schedule $\sigma_{0}\in\Sigma$. Set $k=0$.\\
{\it Step 1:} Compute $D_{\sigma_{k}}$. If $D_{\sigma_{k}}=0$, stop and exit; otherwise, continue.\\
{\it Step 2:} Compute $S_{\sigma_{k},\eta}$ as defined in (13), namely $S_{\sigma_{k},\eta}\ =\ \{s\in[0,T]\ :\ D_{\sigma_{k},s}\leq\eta D_{\sigma_{k}}\}$. \\
{\it Step 3:} Compute $j(\sigma_{k})$ as defined by (14), namely
\begin{eqnarray}
j(\sigma_{k})\ =\nonumber \\
 \min\Big\{j=0,1,\ldots,\ :\ J(\sigma_{k}(\lambda_{j}))-J(\sigma_{k})\leq \alpha\lambda_{j}D_{\sigma_{k}}\Big\}
\end{eqnarray}
with $\lambda_{j}:=\beta^j\mu(S_{\sigma_{k},\eta})$,
and set $\lambda(\sigma_{k}):=\lambda_{j(\sigma_{k})}$.\\
{\it Step 4:} Define  $\sigma_{k+1}:=\sigma_{k}(\lambda(\sigma_{k}))$,  namely the schedule obtained
from $\sigma_{k}$ by flipping the mode at every time-point $s\in S(\lambda(\sigma_{k}))$. Set
$k=k+1$, and go to Step 1.
\end{algorithm}

It must be mentioned that the computation of the set $S_{\sigma_{k},\eta}$ at Step 2 typically
requires an adequate approximation. This paper  analyzes the algorithm under the
assumption of an exact computation of $S_{\sigma_{k},\eta}$, while the case involving adaptive precision will
be treated in a later, more comprehensive publication.

The forthcoming analysis is carried out under Assumption 1, above.
It requires the following two preliminary results, whose proofs follow as  corollaries from established results on sensitivity analysis
of
solutions to differential equations \cite{Polak97}, and hence are relegated to the appendix.
 Given $\sigma\in\Sigma$,
consider an interval $I:=[s_{1},s_{2}]\subset [0,T]$ of a positive length, such that
the modes associated with all $s\in I$ are the same, i.e.,
$v(s)=v(s_{1})$ $\forall s\in I$. Denote by $\sigma_{s_{1}}(\gamma)$ the mode-sequence
obtained from $\sigma$ by flipping the modes at every time $s\in[s_{1},s_{1}+\gamma]$,
and consider the resulting cost function $J(\sigma_{s_{1}}(\gamma))$ as a function of
$\gamma\in [0,s_{2}-s_{1}]$.
\begin{lemma}
There exists a constant $K>0$ such that,
for every $\sigma\in\Sigma$, and for every interval $I=[s_{1},s_{2}]$ as above, the function
$J(\sigma_{s_{1}}(\cdot))$ is twice-continuously differentiable $(C^2)$ on the interval
$\gamma\in[0,s_{2}-s_{1}]$; and for every $\gamma\in[0,s_{2}-s_{1}]$,
$|J(\sigma_{s_{1}}(\gamma))^{''}|\leq K$ (``prime'' indicates derivative with respect to $\gamma)$.
\end{lemma}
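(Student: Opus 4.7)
I would decompose $J(\sigma_{s_{1}}(\gamma))$ over the subintervals $[0,s_{1}]$, $[s_{1},s_{1}+\gamma]$, $[s_{1}+\gamma,s_{2}]$, and $[s_{2},T]$, and then reduce the claim to standard sensitivity analysis for each piece. Write $v:=v(s_{1})$, $v^c$ for its complement, and $\Phi_{w}(t,t_{0},x_{0})$ for the flow of $\dot{x}=f(x,w)$. By Assumption~1(i), $f(\cdot,w)$ is $C^2$, so $\Phi_{w}$ is $C^2$ jointly in $(t,t_{0},x_{0})$ on its domain, and its first- and second-order sensitivities in $x_{0}$ and $t_{0}$ satisfy linear variational ODEs whose coefficients are polynomials in $\partial f/\partial x$ and $\partial^{2}f/\partial x^{2}$. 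Gronwall's inequality then bounds these sensitivities in sup norm on $[0,T]$ in terms of the $C^2$-size of $f$ on any compact set containing the relevant trajectory.

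Concretely, I would write $J(\sigma_{s_{1}}(\gamma))=J_{0}+J_{1}(\gamma)+J_{2}(\gamma)+J_{3}(\gamma)$, where $J_{0}:=\int_{0}^{s_{1}}L(x(t))dt$ is independent of $\gamma$; $J_{1}(\gamma):=\int_{s_{1}}^{s_{1}+\gamma}L(\Phi_{v^c}(t,s_{1},x(s_{1})))dt$ depends on $\gamma$ only through its upper limit; $J_{2}(\gamma):=\int_{s_{1}+\gamma}^{s_{2}}L(\Phi_{v}(t,s_{1}+\gamma,\Phi_{v^c}(s_{1}+\gamma,s_{1},x(s_{1}))))dt$; and $J_{3}(\gamma):=\int_{s_{2}}^{T}L(\Psi(t,s_{2},y(\gamma)))dt$ with $y(\gamma):=x(s_{2};\gamma)$ and $\Psi$ the flow of the unmodified tail schedule on $[s_{2},T]$. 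Each $J_{i}$ is $C^2$ in $\gamma$: $J_{1}$ directly by the fundamental theorem of calculus, and $J_{2}$, $J_{3}$ via the chain rule applied to $C^2$ compositions of $L$ with $\Phi_{v}$, $\Phi_{v^c}$, and $\Psi$. Differentiating twice yields $J''(\gamma)$ as a sum of integrals of $\partial L/\partial x$ and $\partial^{2}L/\partial x^{2}$ contracted with first- and second-order flow sensitivities, together with boundary contributions from differentiating the variable endpoints (which simplify using $\Phi_{v}(s_{1}+\gamma,s_{1}+\gamma,\cdot)=\Phi_{v^c}(s_{1}+\gamma,s_{1}+\gamma,\cdot)=\mathrm{id}$).

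The main obstacle is producing a constant $K$ that is uniform over every $\sigma\in\Sigma$ and every admissible interval $I\subseteq[0,T]$. For this, the essential input is that all state trajectories considered---those produced by schedules in $\Sigma$ together with their single-interval mode flips---remain in a fixed compact set $\Omega\subset R^n$; this is a standing hypothesis implicit in Assumption~1 and the paper's framework. On $\Omega$, the norms $\|f\|$, $\|\partial f/\partial x\|$, $\|\partial^{2}f/\partial x^{2}\|$, $\|dL/dx\|$, and $\|d^{2}L/dx^{2}\|$ are bounded by absolute constants, and the Gronwall estimates of the flow sensitivities on $[0,T]$ depend only on these bounds and on $T$. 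The resulting bound on $|J(\sigma_{s_{1}}(\gamma))''|$ then takes the form of a constant $K$ depending only on $T$, $\Omega$, and the $C^2$-size of $f$ and $L$ on $\Omega$, and in particular independent of $\sigma$, $s_{1}$, and $s_{2}$, as required. Establishing this uniform invariance of trajectories in a common compact set is the only delicate point; the rest is a routine chain-rule and Gronwall calculation.
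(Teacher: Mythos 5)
Your proposal is correct in its overall architecture but takes a genuinely different route from the paper. The paper's proof is essentially a citation: it invokes Proposition 5.6.5 of \cite{Polak97} together with the Bellman--Gronwall lemma to get a uniform $L^{\infty}$ bound on $x$ (and hence on the costate $p$ via Equation (3)), and then applies Theorem 5.6.10 of the same reference, which expresses $J(\sigma_{s_{1}}(\cdot))''$ in terms of the Hamiltonian and its first two derivatives; uniform boundedness of $x$ and $p$ then yields the uniform $K$. You instead carry out the sensitivity analysis by hand, splitting $J$ into the four pieces $J_{0},\dots,J_{3}$, writing everything through flows, and differentiating twice via the chain rule and the variational equations. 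Your route buys self-containedness, avoids the costate entirely, and makes explicit which quantities enter $K$; your remark that the Gronwall constants depend only on $T$ and the $C^{2}$ size of the data on the invariant set (and not on the number of switches in the tail schedule) is exactly the reason $K$ can be taken independent of $\sigma$ and $I$. The one soft spot is your handling of the compact invariant set $\Omega$: Assumption 1 (each $f(\cdot,v)$ is $C^{2}$ and each trajectory is continuous on $[0,T]$) bounds each individual trajectory but does not by itself confine \emph{all} trajectories over \emph{all} schedules to a common compact set, so this is not ``implicit in Assumption 1''; the paper derives the uniform bound from the growth hypotheses built into Polak's Proposition 5.6.5 plus Gronwall rather than assuming it. Since you explicitly flag this as the delicate point and it is precisely the step the paper also settles by citation, the discrepancy is one of attribution rather than substance, but you should either impose the requisite linear-growth condition explicitly or invoke the same sensitivity results rather than treating the compactness as a standing hypothesis.
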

\begin{proof}
Please see the appendix.
\end{proof}

We remark that the $C^2$ property of $J(\sigma_{s_{1}}(\cdot))$ is in force only as long as $v(s)=v(s_{1})$ $\forall s\in[s_{1},s_{2}]$. The second
assertion of the above lemma does not quite follow from the first one; the bound $K$ is independent of the specific interval
$[s_{1},s_{2}]$.

Lemma 1  in conjunction with Corollary 1 (above)    can yield  sufficient descent only in a local sense, as
long as the same mode is scheduled according to $\sigma$. At mode-switching times
$D_{\sigma,s}$ is no longer continuous in $s$, and hence Lemma 1 cannot be extended
to intervals where $v(\cdot)$ does not have a constant value. Nonetheless we can prove the
sufficient-descent property in a more global sense with the aid of the following result, whose validity  is
due to the
 special structure of the problem.
\begin{lemma}
There exists a constant $K>0$ such that for every $\sigma\in\Sigma$, for every interval $I=[s_{1},s_{2}]$ as above (i.e.,
such that $\sigma$ has the same mode throughout $I$),
for
every $\gamma\in[0,s_{2}-s_{1})$, and for every $s\geq s_{2}$,
\begin{equation}
|D_{\sigma_{s_{1}}(\gamma),s}-D_{\sigma,s}|\ \leq\ K\gamma.
\end{equation}
\end{lemma}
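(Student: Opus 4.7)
The plan is to bound the quantity $|D_{\sigma_{s_1}(\gamma),s} - D_{\sigma,s}|$ by bounding, one ingredient at a time, the pieces that make up the formula
\[
D_{\sigma,s,w} \;=\; p(s)^T\bigl(f(x(s),w)-f(x(s),v(s))\bigr),
\]
exploiting the fact that for $s \geq s_2$ the schedules $\sigma$ and $\sigma_{s_1}(\gamma)$ prescribe the same mode at $s$ (since the flipping is confined to $[s_1,s_1+\gamma]\subset[s_1,s_2)$), so that only $x(s)$ and $p(s)$ differ between the two schedules.

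First I would control the state trajectory. Denote by $x_\gamma(\cdot)$ the state for $\sigma_{s_1}(\gamma)$ and by $x(\cdot)$ the one for $\sigma$. On $[0,s_1]$ they coincide. On $[s_1,s_1+\gamma]$ both satisfy an ODE whose right-hand side is uniformly bounded (the state trajectories of Assumption~1(ii) lie in a bounded region, $V$ is finite, and $f$ is $C^2$), so $\|x_\gamma(s_1+\gamma)-x(s_1+\gamma)\|\leq c_1\gamma$. On $[s_1+\gamma,T]$ both trajectories use the same mode at each time, and so by the Lipschitz property of $f$ (uniform on the relevant bounded region) and Gronwall's inequality we obtain $\|x_\gamma(t)-x(t)\|\leq c_2\gamma$ for all $t\in[s_1+\gamma,T]$, with $c_2$ independent of $\sigma$, $I$ and $\gamma$.

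Next I would propagate this bound to the costate on $[s_2,T]$. Both $p$ and $p_\gamma$ satisfy (3) with terminal value zero and with the same $v$ at each $t\in[s_2,T]$, but with the different state trajectories entering through $\partial f/\partial x$ and $dL/dx$. Writing the backward ODE for the difference $p_\gamma-p$, using the Lipschitz continuity of $\partial f/\partial x$ and $dL/dx$ on the bounded region, substituting the state bound $\|x_\gamma(t)-x(t)\|\leq c_2\gamma$, and applying Gronwall backwards from $T$ yields $\|p_\gamma(s)-p(s)\|\leq c_3\gamma$ uniformly for $s\in[s_2,T]$.

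Finally, for any $w\in V$ and any $s\geq s_2$, since $v_{\sigma_{s_1}(\gamma)}(s)=v_\sigma(s)=:v(s)$, I can write
\[
D_{\sigma_{s_1}(\gamma),s,w}-D_{\sigma,s,w}
=\bigl(p_\gamma(s)-p(s)\bigr)^T\!\Delta f_\gamma(s)+p(s)^T\bigl(\Delta f_\gamma(s)-\Delta f(s)\bigr),
\]
where $\Delta f_\gamma(s)=f(x_\gamma(s),w)-f(x_\gamma(s),v(s))$ and $\Delta f(s)$ is defined analogously with $x$ in place of $x_\gamma$. The first term is $O(\gamma)$ by the costate bound and boundedness of $\Delta f_\gamma$; the second term is $O(\gamma)$ by boundedness of $p$ together with the Lipschitz continuity of $f(\cdot,w)$ and $f(\cdot,v(s))$ on the bounded region, combined with the state bound. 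This gives $|D_{\sigma_{s_1}(\gamma),s,w}-D_{\sigma,s,w}|\leq K\gamma$ with $K$ independent of $w$, $s$, $\sigma$, and $I$. Taking minima over $w\in V$ and using the elementary inequality $|\min_w a_w-\min_w b_w|\leq\max_w|a_w-b_w|$ yields the claim. The main obstacle, as with Lemma~1, is ensuring that all the implicit Lipschitz constants are uniform in $\sigma\in\Sigma$; this rests on having a $\sigma$-independent compact region containing all state and costate trajectories, which is the same ingredient that underwrites Lemma~1 and is the standing hypothesis of the analysis.
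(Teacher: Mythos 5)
Your proof is correct and follows essentially the same route as the paper's: establish uniform Lipschitz continuity of $x$ and $p$ (in $L^{\infty}$ on $[s_2,T]$) with respect to $\gamma$, then transfer this to the insertion gradient via formula (4), noting that the mode at $s\geq s_2$ is unchanged. The only difference is that you derive the Lipschitz bounds directly by Gronwall, whereas the paper delegates them to Lemma 5.6.7 of Polak; your closing remark correctly identifies the one ingredient that must be supplied separately, namely a $\sigma$-independent bound on the state and costate trajectories (which the paper obtains in the proof of Lemma 1 from Proposition 5.6.5 of Polak and Bellman--Gronwall, not from Assumption 1(ii) alone).
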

\begin{proof}
Please see the appendix.
\end{proof}

To explain this result, recall that $\sigma_{s_{1}}(\gamma)$ is the mode-schedule obtained from $\sigma$ by flipping all the modes
on the interval $[s_{1},s_{1}+\gamma]$. Thus, Equation (16) provides an upper bound on the magnitude of the difference between  the insertion gradients
of the sequences $\sigma$ and $\sigma_{s_{1}(\gamma)}$ at the same point $s$. Furthermore, Lemma 2 implies
 a uniform Lipschitz continuity of the insertion gradient
at every point $s>s_{2}$  with respect to the length of the insertion interval
  $\gamma$.  This is not the same as
continuity of $D_{\sigma,s}$ with respect to $s$, which we know is not true.

Recall the following terminology: given $\sigma\in\Sigma$ and $S\subset[0,T]$,  $\sigma(S)$ denotes the schedule obtained
by flipping the mode of $\sigma$ at every $\tau\in S$.
\begin{corollary}
There exists $K>0$ such that, for every $\sigma\in\Sigma$,
for every subset $S\subset[0,T]$ comprised of a finite number of intervals, and for every
$s\geq\sup\{\tilde{s}\in S\}$,
\begin{equation}
|D_{\sigma(S),s}-D_{\sigma,s}|\ \leq\ K\mu(S).
\end{equation}
\end{corollary}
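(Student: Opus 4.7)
The plan is to prove Corollary 2 by reducing it to an iterated application of Lemma 2 along a carefully chosen telescoping chain of schedules. Since $S$ is a finite union of closed intervals and $\sigma$ has only finitely many switching times $0 = \tau_0 < \tau_1 < \cdots < \tau_{\ell(\sigma)} = T$, I would first refine the given decomposition of $S$ by splitting any constituent interval that crosses a switching time of $\sigma$. This yields a representation $S = \bigcup_{k=1}^{M} J_k$, where the $J_k$ are pairwise disjoint closed intervals, ordered left-to-right, and each $J_k$ is contained in a single mode-constant segment $[\tau_{i(k)-1},\tau_{i(k)}]$ of $\sigma$. Note that $\sum_k \mu(J_k) = \mu(S)$.

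Next, I would construct the telescoping chain by setting $\sigma_0 := \sigma$ and $\sigma_k := \sigma_{k-1}(J_k)$ for $k = 1,\ldots,M$, so that $\sigma_M = \sigma(S)$. The key structural point is that $\sigma_{k-1}$ differs from $\sigma$ only on $J_1\cup\cdots\cup J_{k-1}$, which lies strictly to the left of $J_k$ by our ordering. Consequently $\sigma_{k-1}$ agrees with $\sigma$ on $J_k$ and on an interval slightly extending it to the right (up to $\tau_{i(k)}$), so $\sigma_{k-1}$ has constant mode on an interval containing $J_k$. This is exactly the hypothesis needed to apply Lemma 2 to the flip $\sigma_{k-1}\mapsto\sigma_k$, giving
$$|D_{\sigma_k,s}-D_{\sigma_{k-1},s}| \leq K\,\mu(J_k)$$
for every $s$ greater than or equal to the right endpoint of $J_k$, and in particular for every $s\geq \sup\{\tilde{s}\in S\}$.

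The conclusion then follows from the triangle inequality:
$$|D_{\sigma(S),s}-D_{\sigma,s}| \;\leq\; \sum_{k=1}^{M}|D_{\sigma_k,s}-D_{\sigma_{k-1},s}| \;\leq\; K\sum_{k=1}^{M}\mu(J_k) \;=\; K\,\mu(S),$$
with the same constant $K$ as in Lemma 2, which is the desired bound.

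The main obstacle is a minor technical edge case: Lemma 2 requires the flip-length $\gamma$ to satisfy $\gamma<s_2-s_1$ strictly, where $[s_1,s_2]$ is the enclosing mode-constant interval of $\sigma_{k-1}$. If some $J_k$ happens to reach the right endpoint $\tau_{i(k)}$ of its enclosing mode-segment, this hypothesis fails verbatim. I would handle this either by shrinking $J_k$ on the right by an arbitrarily small $\epsilon>0$ and passing to the limit using continuity of $D_{\sigma_{s_1}(\gamma),s}$ in $\gamma$ on closed subintervals, or by observing that the finite set of switching times of $\sigma$ has Lebesgue measure zero and can be removed from $S$ without altering $\mu(S)$ or $\sigma(S)$, after which strict containment holds for every $J_k$.
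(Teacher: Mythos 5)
Your proposal is correct and follows essentially the same route as the paper's own proof: order and subdivide the constituent intervals of $S$ so that each lies within a single mode-constant segment of $\sigma$, build the telescoping chain of schedules obtained by flipping one interval at a time, apply Lemma~2 to each consecutive pair, and sum. Your explicit verification that each intermediate schedule still has constant mode on the next interval to be flipped (because earlier flips lie strictly to its left), and your handling of the endpoint edge case, are details the paper passes over silently, but they do not change the argument.
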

\begin{proof}
Let $K>0$ be the constant given by Lemma 2. Fix $\sigma\in\Sigma$, a subset $S\subset[0,T]$ comprised of a
finite number of intervals, and
$s\geq\sup\{\tilde{s}\in S\}$.
We can assume without loss of generality that each one of the intervals comprising $S$ contains its lower-boundary point but not its upper-boundary point.
Denote these intervals by $I_{j}:=[s_{1,j},s_{2,j})$, $j=1,\ldots,m$ for some $m\geq 1$, so that
$S=\cup_{j=1}^{m}[s_{1,j},s_{2,j})$. Furthermore,
by subdividing these
intervals if necessary, we can assume that $v(\tau)$ has a constant value throughout each interval $I_{j}$, namely all the modes in $I_{j}$ are the same according
to $\sigma$.
Note that these intervals need not be contiguous, i.e., it is possible to have $s_{1,j+1}>s_{2,j}$ for some $j=1,\ldots,m-1$.

Define $S_{j}:=\cup_{i=1}^{j}I_{i}$, $j=1,\ldots,m$, and
note that $S=S_{m}$. Furthermore,
$\mu(S)=\sum_{j=1}^{m}(s_{2,j}-s_{1,j})$. Next, we have that
\begin{eqnarray}
D_{\sigma(S),s}-D_{\sigma,s}\ =\nonumber \\
D_{\sigma(S_{1}),s}-D_{\sigma,s}+\sum_{j=2}^{m}(D_{\sigma(S_{j}),s}-D_{\sigma(S_{j-1}),s}).
\end{eqnarray}
By Lemma 2,
$|D_{\sigma(S_{1}),s}-D_{\sigma,s}|\leq K(s_{2,1}-s_{1,1})$,
and for every
$j=2,\ldots,m$,
$|D_{\sigma(S_{j}),s}-D_{\sigma(S_{j-1}),s}|\leq K(s_{2,j}-,s_{1,j})$.
By (16)
$|D_{\sigma(S),s}-D_{\sigma,s}|\leq K\sum_{j=1}^{m}(s_{2,j}-s_{1,j})$,
and since $\mu(S)=\sum_{j=1}^{m}(s_{2,j}-s_{1,j})$, (17) follows.
\end{proof}

We now can state the algorithm's property of sufficient descent.
\begin{proposition}
Fix $\eta\in(0,1)$, $\beta\in(0,1)$, and $\alpha\in(0,\eta)$.
There exists a constant $c>0$
 such that, for every $\sigma\in\Sigma$ satisfying
$D_{\sigma}<0$, and for every
$\lambda\in[0,\mu(S_{\sigma,\eta})]$ such that $\lambda\leq c|D_{\sigma}|$,
\begin{equation}
J(\sigma(\lambda))-J(\sigma)\ \leq\ \alpha\lambda D_{\sigma}.
\end{equation}
\end{proposition}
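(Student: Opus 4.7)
The plan is to derive sufficient descent by a local second-order Taylor expansion, in the spirit of Proposition~1, applied on each of the finitely many closed intervals that constitute $S(\lambda)$ and then summed telescopically. Concretely, write $S(\lambda)=\bigcup_{i=1}^{m} I_i$ with disjoint $I_i=[a_i,b_i)$ ordered so that $b_i\leq a_{i+1}$; by further subdividing at mode-switching times of $\sigma$ if necessary, we may assume that $\sigma$ has a constant mode on each $I_i$. Set $\gamma_i:=b_i-a_i$, so that $\lambda=\sum_i\gamma_i$, and introduce the nested family $S_0:=\emptyset$, $S_i:=\bigcup_{j\leq i}I_j$. This gives the telescoping identity
\begin{equation}
J(\sigma(\lambda))-J(\sigma)\ =\ \sum_{i=1}^{m}\bigl[J(\sigma(S_i))-J(\sigma(S_{i-1}))\bigr].
\end{equation}

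For each summand, observe that $\sigma(S_{i-1})$ agrees with $\sigma$ on $I_i$, since $S_{i-1}$ is disjoint from $I_i$; hence $\sigma(S_{i-1})$ also has a constant mode on $I_i$. Applying Lemma~1 to the base schedule $\sigma(S_{i-1})$ on the interval $I_i$ yields the one-sided Taylor bound
\begin{equation}
J(\sigma(S_i))-J(\sigma(S_{i-1}))\ \leq\ D_{\sigma(S_{i-1}),a_i,v(a_i)^c}\,\gamma_i+\tfrac{K}{2}\gamma_i^2,
\end{equation}
the linear coefficient being the insertion gradient of the complementary mode at $a_i$ relative to the current schedule $\sigma(S_{i-1})$. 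To relate this to the original schedule, I invoke Corollary~2 (which I argue extends from $D_{\sigma,s}$ to the specific direction $w=v(a_i)^c$ by rerunning its telescoping proof component-wise), obtaining
\begin{equation}
D_{\sigma(S_{i-1}),a_i,v(a_i)^c}\ \leq\ D_{\sigma,a_i,v(a_i)^c}+K\mu(S_{i-1})\ \leq\ \eta D_\sigma+K\lambda,
\end{equation}
where the second inequality uses the bimodal identity $D_{\sigma,a_i}=D_{\sigma,a_i,v(a_i)^c}$ at points where $D_{\sigma,a_i}<0$ combined with the defining property $a_i\in S_{\sigma,\eta}\Rightarrow D_{\sigma,a_i}\leq\eta D_\sigma$. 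By enlarging $K$ if necessary, a single constant serves both Lemma~1 and Corollary~2.

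Summing over $i$ and using $\sum_i\gamma_i^2\leq(\max_i\gamma_i)\sum_i\gamma_i\leq\lambda\cdot\lambda=\lambda^2$, we get
\begin{equation}
J(\sigma(\lambda))-J(\sigma)\ \leq\ \eta D_\sigma\lambda+\tfrac{3K}{2}\lambda^2.
\end{equation}
Since $D_\sigma<0$ and $\alpha<\eta$, the desired inequality $J(\sigma(\lambda))-J(\sigma)\leq\alpha\lambda D_\sigma$ reduces to $\tfrac{3K}{2}\lambda\leq(\eta-\alpha)|D_\sigma|$, which holds whenever $\lambda\leq c|D_\sigma|$ with the explicit choice $c:=2(\eta-\alpha)/(3K)$.

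The step I expect to be most delicate is the application of Corollary~2 to the single-direction quantity $D_{\sigma,s,v(s)^c}$ rather than the pointwise minimum $D_{\sigma,s}$; one must check that the telescoping proof of Corollary~2 (which only invoked Lemma~2 plus summability of interval lengths) goes through unchanged when both insertion gradients are evaluated in a common fixed mode direction. Apart from this bookkeeping point, the argument is a close mimicry of the Euclidean Armijo analysis in Proposition~1, with $\mu(S_{\sigma,\eta})$ replacing $\|\nabla f(x)\|$ as the natural step-size scale.
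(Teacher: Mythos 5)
Your proof is correct, and it follows the same architecture as the paper's: order the constituent intervals of $S(\lambda)$, telescope the differences $J(\sigma(S_i))-J(\sigma(S_{i-1}))$, control each increment via the uniform local $C^2$ bound of Lemma 1, and control the drift of the insertion gradient caused by the earlier flips via Corollary 2. The one substantive difference is how the per-interval descent is quantified: the paper routes through Proposition 1's Armijo-type inequality, which introduces an auxiliary parameter $a\in(\alpha/\eta,1)$ and forces a further subdivision of the intervals to length at most $-\xi\eta D_{\sigma}$ so that its inequality (20) applies on each piece; you instead keep the explicit quadratic Taylor remainder $\tfrac{K}{2}\gamma_i^2$ and absorb $\sum_i\gamma_i^2\leq\lambda^2$ at the end. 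Your variant buys cleaner bookkeeping, a decomposition independent of $|D_{\sigma}|$, and the explicit constant $c=2(\eta-\alpha)/(3K)$ (versus the paper's $c<\min\{\tfrac{2}{aK}(a\eta-\alpha),\tfrac{\eta}{K}\}$); nothing is lost relative to the paper. The delicate point you flag --- applying Corollary 2 to the fixed-direction gradient $D_{\sigma,s,v(s)^c}$ rather than to the minimum $D_{\sigma,s}$ --- is genuine but harmless: the proof of Lemma 2 actually bounds the fixed-direction quantity, since for $s\geq s_2$ the control value at $s$ is unchanged by the flip and the bound follows from the Lipschitz dependence of $x$ and $p$ in Equation (4), after which the telescoping argument of Corollary 2 goes through verbatim; indeed the paper itself silently makes the same identification in passing from its inequality (26) to (27), where $D_{\sigma^{j-1},s_{1,j}}$ must be read as the derivative of $J(\sigma^{j-1}_{s_{1,j}}(\cdot))$ at zero, i.e.\ the complementary-mode insertion gradient.
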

\begin{proof}
Consider $\sigma\in\Sigma$ and an interval $I:=[s_{1},s_{2})$ such that $\sigma$ has the same mode throughout $I$. By
Lemma 1, $J(\sigma_{s_{1}}(\gamma))$ is $C^2$ in $\gamma\in[0,s_{2}-s_{1})$, and by  (4),
$J(\sigma_{s_{1}}(0))^{'}=D_{\sigma,s_{1}}$.
 Fix $a\in(\frac{\alpha}{\eta},1)$.
Suppose that $D_{\sigma,s_{1}}<0$. By Proposition 1 (Equation (9))  there exists $\xi>0$ such that, for every $\gamma\geq 0$ satisfying
$\gamma\leq\min\{-\xi D_{\sigma,s_{1}},s_{2}-s_{1}\}$,
\begin{equation}
J(\sigma_{s_{1}}(\gamma))-J(\sigma)\ \leq\ a\gamma D_{\sigma,s_{1}}.
\end{equation}
Furthermore, $\xi$ does not depend on the mode-schedule $\sigma$ or on the interval $I$.

Next, by Corollary 2 there exists a constant $K>0$ such that, for every $\sigma\in\Sigma$,  for every set
$S\subset[0,T]$ consisting of the finite union of intervals, and for every point $s\geq\sup\{\tilde{s}\in S\}$,
\begin{equation}
|D_{\sigma(S),s}-D_{\sigma,s}|\ \leq\ K\mu(S).
\end{equation}

Fix $c>0$ such that
\begin{equation}
c\ <\ \min\{\frac{2}{aK}(a\eta-\alpha),\frac{\eta}{K}\};
\end{equation}
we  next prove the assertion of the proposition for this $c$.
Fix  $\sigma\in\Sigma$ such that $D_{\sigma}<0$, and consider a set $S\subset S_{\sigma,\eta}$ consisting of the finite union of disjoint
intervals. By subdividing these intervals if necessary we can ensure that the length of each one of them is less than
$-\xi\eta D_{\sigma}$. Denote these intervals by
$I_{j}:=[s_{1,j},s_{2,j})$, $j=1,\ldots,m$ (for some $m>0$), define $\gamma_{j}:=s_{2,j}-s_{1,j}$, and define
$\lambda:=\sum_{j=1}^{M}\gamma_{j}$. Since $s_{1,j}\in S_{\sigma,\eta}$ we have that $D_{\sigma,s_{1,j}}\leq -\eta D_{\sigma}$, and
we recall that $\gamma_{j}\leq-\xi\eta D_{\sigma}$ $\forall j=1,\ldots,m$.

Next, we define the mode-schedules $\sigma^j$, $j=1,\ldots,m$, in the following recursive manner.
For $j=1$, $\sigma^1=\sigma_{s_{1,1}}(\gamma_{1})$; and for every $j=2,\ldots,m$,
$\sigma^j:=\sigma_{s_{1,j}}^{j-1}(\gamma_{j})$. In words, $\sigma^1$ is obtained from $\sigma$ by
flipping the mode at every time $s\in I_{1}$; and for every
$j=2,\ldots,m$, $\sigma^{j}$ is obtained from
$\sigma^{j-1}$ by flipping the mode at every time $s\in I_{j}$.
 Observe that $\sigma^{j}$
is also obtained from $\sigma$ by flipping the mode
at every time $s\in\cup_{i=1}^{j}I_{i}$. In
particular, $\sigma^{m}$ is obtained from $\sigma$ by flipping the modes at every time $s\in S$.
Since by assumption $\mu(S)=\sum_{j=1}^{m}\gamma_{j}=\lambda$,
we will use the notation
$\sigma(S):=\sigma(\lambda)$.

Suppose that $\lambda\leq -cD_{\sigma}$; we next establish Equation (19), and this will complete the proof.
Consider the difference-term
$J(\sigma^{j})-J(\sigma)$ for $j=1,\ldots,m$.
For $j=1$,
$J(\sigma^{1})-J(\sigma)\leq a\gamma_{1}D_{\sigma,s_{1,j}}$ (by (20)); and since $s_{1,j}\in S_{\sigma,\eta}$, $D_{\sigma,s_{1,j}}\leq\eta D_{\sigma}$, and hence
\begin{equation}
J(\sigma^{1})-J(\sigma)\ \leq\ a\gamma_{1}\eta D_{\sigma}.
\end{equation}
Nest, consider $j=2,\ldots,m$. An inequality like (23) does not necessarily hold since $\sigma^{j}$ is obtained from $\sigma$ by flipping the mode at every $s\in\cup_{i=1}^{j}I_{j}$
and $\mu(\cup_{i=1}^{j}I_{j})$ may be larger than $-\xi D_{\sigma,s_{1,j}}$, and therefore an inequality like (20) cannot be applied.
A different argument is needed.

Consider the term $J(\sigma^{j})-J(\sigma)$. Subtracting and adding $J(\sigma^{j-1})$ we obtain,
\begin{equation}
J(\sigma^{j})-J(\sigma)\ =\ J(\sigma^{j})-J(\sigma^{j-1})+J(\sigma^{j-1})-J(\sigma).
\end{equation}
Now $\sigma^{j}$ is obtained from $\sigma^{j-1}$ by flipping the mode at every time $s\in I_{j}$ and hence $\sigma^{j}=\sigma_{s_{1,j}}^{j-1}(\gamma_{j})$,
while $\sigma^{j-1}=\sigma_{s_{1,j}}^{j-1}(0)$ since
in the latter term no mode is being flipped. Therefore,
\begin{equation}
J(\sigma^{j})-J(\sigma^{j-1})\ =\ J(\sigma_{s_{1,j}}^{j-1}(\gamma_{j}))-J(\sigma_{s_{1,j}}^{j-1}(0)).
\end{equation}
We next show that $D_{\sigma^{j-1},s_{1,j}}<0$ in order to be able to use Equation (20).
By (17),
\begin{equation}
|D_{\sigma^{j-1},s_{1,j}}-D_{\sigma,s_{1,j}}|\ \leq\ K\Sigma_{i=1}^{j-1}\gamma_{i}.
\end{equation}
By definition $\Sigma_{i=1}^{j-1}\gamma_{i}\leq\sum_{i=1}^{m}\gamma_{i}=\lambda$; by assumption $\lambda\leq c|D_{\sigma}|$; and by (22) $K\leq\frac{\eta}{c}$;
consequently, and by (26),
$|D_{\sigma^{j-1},s_{1,j}}-D_{\sigma,s_{1,j}}|\leq \eta |D_{\sigma}|$.
But $s_{1,j}\in S_{\sigma,\eta}$ and hence $D_{\sigma,s_{1,j}}\leq\eta D_{\sigma}$, and this
implies that $D_{\sigma^{j-1},s_{1,j}}\leq 0$.

An application of (20) to (25) now yields that
\begin{equation}
J(\sigma^j)-J(\sigma^{j-1})\ \leq\ a\gamma_{j}D_{\sigma^{j-1},s_{1,j}}.
\end{equation}
We do not know whether or not $D_{\sigma^{j-1},s_{1,j}}\leq \eta D_{\sigma}$, but we know that
$D_{\sigma,s_{1,j}}\leq \eta D_{\sigma}$ (since
$s_{1,j}\in S_{\sigma,\eta})$. Applying
(26)  to (27) we  obtain that
\begin{eqnarray}
J(\sigma^{j})-J(\sigma^{j-1})\ \leq\ a\gamma_{j}D_{\sigma^{j-1},s_{1,j}}\ =\nonumber \\
 a\gamma_{j}D_{\sigma,s_{1,j}}+a\gamma_{j}(D_{\sigma^{j-1},s_{1,j}}-D_{\sigma,s_{1,j}})\ \leq\nonumber \\  a\gamma_{j}D_{\sigma,s_{1,j}}+a\gamma_{j}K\sum_{i=1}^{j-1}\gamma_{i}.
\end{eqnarray}
But $D_{\sigma,s_{1,j}}\leq\eta D_{\sigma}$ (since $s_{1,j}\in S_{\sigma,\eta}$), and hence,
$J(\sigma^{j})-J(\sigma^{j-1})\leq a\gamma_{j}\eta D_{\sigma}+aK\gamma_{j}\sum_{i=1}^{j-1}\gamma_{i}$.
Using this inequality in (24) yields the following one,
\begin{equation}
J(\sigma^{j})-J(\sigma)\ \leq\ a\gamma_{j}\eta D_{\sigma}+aK\gamma_{j}\sum_{i=1}^{j-1}\gamma_{i}+J(\sigma^{j-1})-J(\sigma).
\end{equation}
Apply (29) repeatedly and recursively with $j=1,\ldots,m$ to obtain, after some algebra, the following inequality:
\begin{equation}
J(\sigma^{m})-J(\sigma)\ \leq\ a(\sum_{i=1}^{m}\gamma_{i})\eta D_{\sigma}+aK\sum_{i,\ell=1,i\neq\ell}^{m}\gamma_{i}\gamma_{\ell}.
\end{equation}
But $\sum_{i=1}^{m}\gamma_{i}=\lambda$,
$\sum_{i,\ell=1,i\neq\ell}^{m}\gamma_{i}\gamma_{\ell}\leq\frac{1}{2}(\sum_{i=1}^{m}\gamma_{i})^2=\frac{1}{2}\lambda^2$, and
$\sigma^{m}:=\sigma(\lambda)$, and hence,
\begin{equation}
J(\sigma(\lambda))-J(\sigma)\ \leq\ a\lambda\eta D_{\sigma}+\frac{1}{2}aK\lambda^2.
\end{equation}
By assumption $\lambda\leq -cD_{\sigma}$, and by (22)
$aKc<2(a\eta-\alpha)$, and this, together with (31), implies (19). The proof is now complete.
\end{proof}

General results concerning sufficient descent, analogous to
Proposition 2, provide key arguments in proving asymptotic convergence of nonlinear-programming  algorithms (see, e.g., \cite{Polak97}). In our case, the optimality function has the peculiar property that it is discontinuous in the Lebesgue measure of the set where a mode is flipped. To see this,
recall that $D_{\sigma,s,v(s)}=p(s)^T\big(f(x(s),v(s)^c)-f(x(s),v(s))\big)$ (see Equation (4)),
and hence a change of the mode at time $s$ would flip the sign of $D_{\sigma,s,v(s)}$. This can result in situations where  $|D_{\sigma}|$
is ``large'' while   $S_{\sigma,\eta}$ is ``small'', and for this reason, convergence of Algorithm 1 is characterize by Equation (5) with the  $limsup$ rather than
with the stronger assertion with $lim$. This is the subject of the following result.
\begin{corollary}
Suppose that Algorithm 1 computes a sequence of schedules,
$\{\sigma_{k}\}_{k=1}^{\infty}$. Then Equation (5) is in force, namely
$\limsup_{k\rightarrow\infty}D_{\sigma_{k}}=0$.
\end{corollary}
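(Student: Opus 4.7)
The plan is to combine Proposition~2 with a telescoping argument on $\{J(\sigma_k)\}$, and then rule out, by contradiction, any subsequence on which $|D_{\sigma_k}|$ stays bounded away from zero.

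First I would derive a lower bound on the Armijo step size $\lambda(\sigma_k)$. Let $c>0$ be the constant furnished by Proposition~2. The Armijo test (15) is passed at every $\lambda\in[0,\mu(S_{\sigma_k,\eta})]$ satisfying $\lambda\le c|D_{\sigma_k}|$, so $j(\sigma_k)$ cannot exceed the smallest integer $j$ with $\beta^j\mu(S_{\sigma_k,\eta})\le c|D_{\sigma_k}|$. Splitting into the case $\mu(S_{\sigma_k,\eta})\le c|D_{\sigma_k}|$ (where $j(\sigma_k)=0$ and $\lambda(\sigma_k)=\mu(S_{\sigma_k,\eta})$) and the case $\mu(S_{\sigma_k,\eta})>c|D_{\sigma_k}|$ (where minimality of $j(\sigma_k)$ forces $\lambda(\sigma_k)>\beta c|D_{\sigma_k}|$) yields the uniform lower bound $\lambda(\sigma_k)\ge\min\bigl\{\mu(S_{\sigma_k,\eta}),\,\beta c|D_{\sigma_k}|\bigr\}$. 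Substituting into (15) delivers
\[
J(\sigma_{k+1})-J(\sigma_k)\ \le\ -\alpha\min\bigl\{\mu(S_{\sigma_k,\eta})|D_{\sigma_k}|,\,\beta c|D_{\sigma_k}|^2\bigr\}.
\]
Under Assumption~1, a standard Gronwall estimate bounds $J$ from below on the $J(\sigma_0)$-sublevel set, so the monotone non-increasing sequence $\{J(\sigma_k)\}$ converges. Telescoping the displayed bound then gives $\sum_k\min\bigl\{\mu(S_{\sigma_k,\eta})|D_{\sigma_k}|,\beta c|D_{\sigma_k}|^2\bigr\}<\infty$, and in particular the summand tends to zero.

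Assume for contradiction that $\limsup_k D_{\sigma_k}<0$, so $|D_{\sigma_k}|\ge\delta>0$ for all $k\ge K$. Any iteration with $\mu(S_{\sigma_k,\eta})>c|D_{\sigma_k}|$ contributes at least $\alpha\beta c\delta^2$ to the sum, hence can occur only finitely often; from some index onward the summand reduces to $\alpha\mu(S_{\sigma_k,\eta})|D_{\sigma_k}|\ge\alpha\delta\mu(S_{\sigma_k,\eta})$, forcing $\mu_k:=\mu(S_{\sigma_k,\eta})\to 0$. Since $\lambda(\sigma_k)=\mu_k$ in this regime, the flipping set coincides (up to a null set) with $S_{\sigma_k,\eta}$ itself. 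On that set the bi-modal identity $D_{\sigma,s,v(s)^c}=p(s)^T\bigl(f(x,v(s)^c)-f(x,v(s))\bigr)$ flips sign when $v(s)$ is swapped with $v(s)^c$, so $D_{\sigma_{k+1},s}$ becomes nonnegative---hence zero after taking the built-in minimum with zero---modulo an $O(\mu_k)$ perturbation in $x(\cdot)$ and $p(\cdot)$. Off the flipping set, $D_{\sigma_k,s}\ge\eta D_{\sigma_k}$ by definition of $S_{\sigma_k,\eta}$, and Corollary~2 controls $|D_{\sigma_{k+1},s}-D_{\sigma_k,s}|$ by $K\mu_k$. Combining these two estimates yields the contractive inequality $|D_{\sigma_{k+1}}|\le\eta|D_{\sigma_k}|+O(\mu_k)$, and iterating it with $\eta\in(0,1)$ and $\mu_k\to 0$ drives $|D_{\sigma_k}|$ to zero, contradicting $|D_{\sigma_k}|\ge\delta$.

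The main obstacle is the final contractive estimate. Corollary~2 as stated controls $D_{\sigma(S),s}-D_{\sigma,s}$ only for $s\ge\sup S$, so covering points $s$ preceding the flipping set requires the symmetric backward estimate, obtained by exchanging the roles of the forward state $x$ and the backward costate $p$ in the proof of Lemma~2; this must then be combined with the $O(\mu_k)$ perturbation of $x$ and $p$ under a flip of total Lebesgue measure $\mu_k$ to justify the sign-flip approximation on the flipping set. This is precisely the delicate issue flagged in the paragraph preceding the corollary and in the footnote after (5), and it is the step that turns the innocuous ``$\mu_k\to 0$'' conclusion into genuine contractivity of $|D_{\sigma_k}|$.
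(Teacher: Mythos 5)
Your proposal is correct and follows essentially the same route as the paper's own proof: contradiction hypothesis, sufficient descent forcing $\lambda(\sigma_k)\to 0$ and hence $\lambda(\sigma_k)=\mu(S_{\sigma_k,\eta})\to 0$, and then the sign-flip of the insertion gradient on the flipped set combined with the $\eta$-threshold off it to obtain the contraction $D_{\sigma_{k+1}}\ge\zeta D_{\sigma_k}$. The obstacle you flag at the end is resolved exactly as you suggest --- the paper bypasses the one-sided restriction of Corollary~2 by invoking the $L^{\infty}$ Lipschitz dependence of $x$ and $p$ on the Lebesgue measure of the flipped set (Proposition 5.6.7 of Polak), which controls $D_{\sigma_{k+1},s,w}$ at \emph{all} $s$ via Equation~(4).
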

\begin{proof}
Suppose, for the sake of contradiction, that Equation (5) does not hold. Then the sufficient-descent property proved in Proposition 2  implies that $\lim_{k\rightarrow\infty}\lambda(\sigma_{k})=0$, for
otherwise
Equation (19) would yield $\lim_{k\rightarrow\infty}J(\sigma_{k})=-\infty$ which is impossible.
Next, By the definition of $\lambda(\sigma_{k})$ (Step 3 of the algorithm),
there exists $k_{0}$ such that $\forall k\geq k_{0}$, $\lambda(\sigma_{k})=\mu(S_{\sigma_{k},\eta})$, and hence $\lim_{k\rightarrow\infty}\mu(S_{\sigma_{k},\eta})=0$.
In this case, $\sigma_{k+1}$ is obtained from $\sigma_{k}$ by flipping the modes at every $s\in S_{\sigma_{k},\eta}$. By the perturbation theory of differential equations
(e.g., Proposition 5.6.7 in \cite{Polak97}), $x$ and $p$ are Lipschitz continuous in their $L^{\infty}$ norms with respect to
the Lebesgue measure of the sets where the modes are flipped, i.e. $\mu(S_{\sigma_{k},\eta})$. Therefore,
and by (4) and the definition of $S_{\sigma_{k},\eta}$, there exist $k_{1}\geq k_{0}$ and $\zeta\in(0,1)$ such that
$\forall k\geq k_{1}$, $D_{\sigma_{k+1},\eta}\geq \zeta D_{\sigma_{k},\eta}$, implying   that
$\lim_{k\rightarrow\infty}D_{\sigma_{k},\eta}=0$. However this is a contradiction to the assumption that (5) does not hold, thus completing the proof.
\end{proof}

Alternative optimality functions can be considered as well, like the term
$D_{\sigma}\mu(S_{\sigma,\eta})$, where it is apparent  Equation (19)   that
$\lim_{k\rightarrow\infty}D_{\sigma_{k}}\mu(S_{\sigma_{k},\eta})=0$. The choice of the ``most appropriate''
optimality function is an interesting theoretical question that will be addressed elsewhere, while here we consider
the simplest and (in our opinion)
most intuitive optimality function $D_{\sigma}$,
despite its technical peculiarities.

Finally, a word must be said about the general case where the set $V$ consists of more than two points.
The algorithm and much of its analysis remain unchanged, except that for a given $\sigma\in\Sigma$,
at a time $s$, the mode associated with $v(s)$ should be switched to the
mode associated with the point $w\in V$ that minimizes the term
$D_{\sigma,s,w}$.

\section {Numerical Example}

We tested the algorithm on the double-tank system shown in Figure 1. The input to the system,
$v$,
is the inflow rate to the upper tank, controlled by the valve and having two possible values,
$v_{1}=1$ and $v_{2}=2$. $x_{1}$ and $x_{2}$ are the fluid levels at the upper tank and lower tank, respectively, as shown in the figure. According to Toricelli's law, the state equation
is
\begin{equation}
\left(
\begin{array}{c}
\dot{x}_{1}\\
\dot{x}_{2}
\end{array}
\right)
\  =\ \left(
\begin{array}{c}
v-\sqrt{x_{1}}\\
\sqrt{x_{1}}-\sqrt{x_{2}}
\end{array}
\right),
\end{equation}
with the (chosen)  initial condition  $x_{1}(0)=x_{2}(0)=2.0$.
Notice that both $x_{1}$ and $x_{2}$ must satisfy the inequalities $1\leq x_{i}\leq 4$,
and if $v=1$ indefinitely than $\lim_{t\rightarrow\infty}x_{i}=1$, while
if $v=2$ indefinitely then $\lim_{t\rightarrow\infty}x_{i}(t)=4$, $i=1,2$.

\begin{figure}
\centering
\epsfig{file=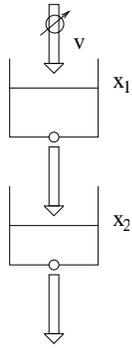,width=1.6cm}\\
 \caption{Two-tank system}
\end{figure}

The objective of the optimization problem is to have the fluid level in the lower tank
track the given value of 3.0, and hence we chose the performance criterion to be
\begin{equation}
J\ =\ 2\int_{0}^{T}\big(x_{2}-3\big)^2dt,
\end{equation}
for the final-time $T=20$. The various integrations were computed by the forward-Euler method with
$\Delta t=0.01$. For the algorithm we chose the parameter-values $\alpha=\beta=0.5$ and
$\eta=0.6$, and we ran it from the initial mode-schedule associated with the control input
$v(t)=1\ \forall\ t\in[0,10]$ and  $v(t)=2\ \forall\ t\in(10,20]$.

Results of a typical run, consisting of 100 iterations of the
 algorithm, are shown in Figures 2-5. Figure 2 shows the control computed after 100 iterations, namely the input control $v$ associated with $\sigma_{100}$.
 The graph is not surprising, since we expect the optimal control initially to consist of $v=2$
 so that $x_{2}$ can rise to a value close to 3, and then to enter a sliding mode
 in order for $x_{2}$ to maintain its proximity to 3. This is evident from Figure 2,
 where the sliding mode has begun to be constructed. Figure 3 shows the resulting
 state trajectories $x_{1}(t)$ and $x_{2}(t)$, $t\in[0,T]$, associated with the last-computed schedule  $\sigma_{100}$.
 The jagged curve is of $x_{1}$ while the smoother curve is of $x_{2}$.
 It is evident that $x_{2}$ climbs towards 3 initially and tends to stay there thereafter.
 Figure 4 shows the graph of the cost criterion $J(\sigma_{k})$ as a function of the iteration count $k=1,\ldots,100$.
 The initial schedule, $\sigma_{1}$, is far away from the minimum and its associated cost is $J(\sigma_{1})=70.90$, and the cost of the last-computed schedule
 is $J(\sigma_{100})=4.87$. Note that
 $J(\sigma_{k})$ goes down to under 8 after 3 iterations. Figure 5 shows the optimality function
 $D_{\sigma_{k}}$ as a function of the iteration count $k$. Initially $D_{\sigma_{1}}=-14.92$
 while at the last-computed schedule $D_{\sigma_{100}}=-0.23$, and it is seen that $D_{\sigma_{k}}$ makes significant
 climbs towards 0 in  few iterations. We also ran the algorithm for 200 iterations
 from the same initial schedule $\sigma_{1}$, in order to verify that $J(\sigma_{k})$ and $D_{\sigma_{k}}$
 stabilize. Indeed they do, and $J$ declined from $J(\sigma_{100})=4.87$ to $J(\sigma_{200})=4.78$, while the optimality functions continues to rise towards 0, from $D_{\sigma_{100}}=-0.23$ to $D_{\sigma_{200}}=-0.062$.

\begin{figure}
\centering
\epsfig{file=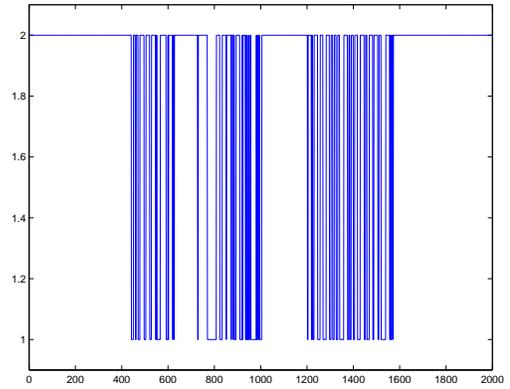,width=6.6cm}\\
 \caption{Control (schedule) obtained after 100 iterations}
\end{figure}

\begin{figure}
\centering
\epsfig{file=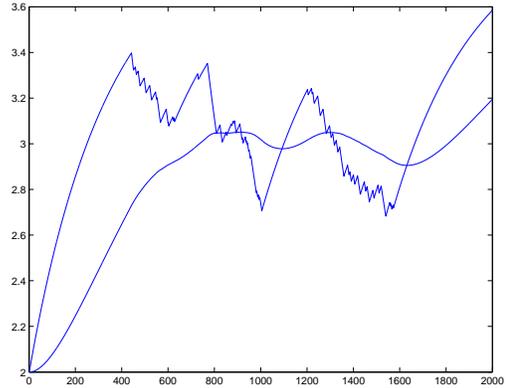,width=6.6cm}\\
 \caption{$x_{1}$ and $x_{2}$ vs. $t$}
\end{figure}

\begin{figure}
\centering
\epsfig{file=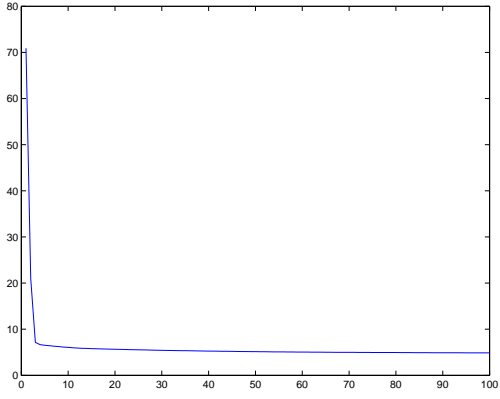,width=6.6cm}\\
 \caption{Cost criterion vs. iteration count}
\end{figure}

\begin{figure}
\centering
\epsfig{file=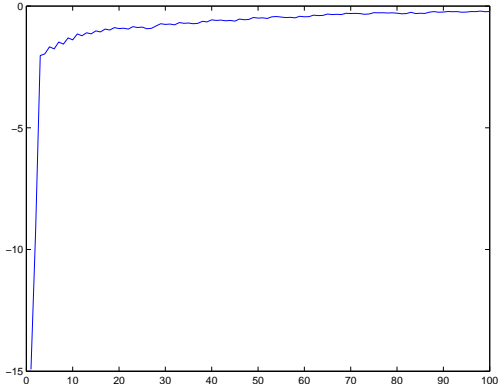,width=6.6cm}\\
 \caption{Optimality function vs. iteration count}
\end{figure}

\section{Conclusions}
This paper proposes a new algorithm for the optimal mode-scheduling problem, where it is desirable to minimize
an integral-cost criterion defined on the system's state trajectory as a function of the modes' schedule.
The algorithm is based on the principle of gradient descent with Armijo step sizes, comprised of the Lebesgue measures of sets where the modes are
being changed. Asymptotic convergence is proved in the sense of minimizing sequences, and simulation results support the theoretical developments.
Future research will refine the proposed algorithmic framework and apply it to large-scale problems.

\section{Appendix}

The purpose of this appendix is to provide proofs to Proposition 1, and Lemmas 1 and 2.

{\it Proof of Proposition 1.}

(1). The main argument is based on the following form of the second-order Taylor series expansion:
For every $x\in R^n$ and $y\in R^n$,
\begin{eqnarray}
f(x+y)-f(x)\ =\ \langle\nabla f(x),y\rangle\ +\nonumber \\
\int_{0}^{1}(1-\xi)\langle H(x+\xi y)y,y\rangle d\xi,
\end{eqnarray}
where $\langle\cdot\rangle$ denotes inner product in $R^n$. Apply this with $y=-\lambda h(x)$
to obtain,
\begin{eqnarray}
f(x-\lambda h(x))-f(x)\ =\ -\lambda\langle\nabla f(x),h(x)\rangle\ +\nonumber \\
\lambda^2\int_{0}^{1}(1-\xi)\langle H(x-\xi\lambda h(x))h(x),h(x)\rangle d\xi.
\end{eqnarray}
Add $\alpha\lambda||\nabla f(x)||$ to both sides of this equation, and use the fact that $||H(\cdot)||\leq L$, to obtain (after some algebra) that
\begin{eqnarray}
f(x-\lambda h(x))-f(x)+\alpha\lambda||\nabla f(x)||\nonumber \\
\leq\ -\lambda\big((1-\alpha)||\nabla f(x)||-\frac{\lambda}{2}L\big).
\end{eqnarray}
Now if $0\leq \lambda\leq \frac{2}{L}(1-\alpha)||\nabla f(x)||$ then the Right-Hand side of (36) is non-positive,  hence Equation (9) is satisfied.

(2). Follows directly from Part (1), Equation (7), and the definition of $\lambda(x)$ (8).\qed

The proofs of Lemma 1 and Lemma 2 follow as corollaries from established results on sensitivity analysis of solutions to differential equations,
presented in  Section 5.6 of
\cite{Polak97}. In fact, the results of interest here involve mode-insertions via needle variations, which is a special case of the
setting in \cite{Polak97} where general variations in the control are considered. Furthermore, the perturbations here are parameterized by
a one-dimensional variable and hence the results are in terms of derivatives in the usual sense, while those in \cite{Polak97} are
 in terms of G\^ateaux or Fr\'echet derivatives.

{\it Proof of Lemma 1.} By Proposition 5.6.5 in \cite{Polak97} and the Bellman-Gronwall Lemma,
the terms $||x(t)||_{L^{\infty}}$ are uniformly bounded over the space of controls $v$ associated with every $\sigma\in\Sigma$.
The costate equation (3) yields a similar result for
$||p(t)||_{L^{\infty}}$. Next, recall that $v(\cdot)$ has a constant value throughout the interval
$[s_{1},s_{2}]$, and hence the differentiability assumptions of Theorem 5.6.10 in \cite{Polak97} are valid.
This theorem implies that
$J(\sigma_{s_{1}}(\gamma))^{''}$ exists and is expressed in terms of the Hamiltonian and its first two derivatives, hence it is uniformly bounded.\qed

{\it Proof of Lemma 2.} Since $v(\cdot)$ has a constant value throughout the interval $[s_{1},s_{2}]$, the assumptions made in the statement of Lemma 5.6.7
in \cite{Polak97} are in force. This implies a uniform Lipschitz continuity of $x$ and $p$ with respect to variations in $\gamma$. In the setting of Lemma 2, the needle
variations is made at the same point $s\geq s_{2}$ for both mode-schedules $\sigma$ and $\sigma_{s_{1}}(\gamma)$, and hence, and by Equation (4),
Equation (16) follows.\qed

\end{document}